\newtheorem{theorem}{Theorem}
\newtheorem{definition}{Definition}
\newtheorem{lemma}[theorem]{Lemma}
\newtheorem{corollary}[theorem]{Corollary}
\newtheorem{remark}{Remark}
\newtheorem{example}{Example}
\newenvironment{proof}{\noindent {\em Proof.}}{\hspace*{\fill} $\Box $\newline}
\title{An Algorithm for Computing the Covering Radius of a Linear Code Based on Vilenkin-Chrestenson Transform}
\author{Paskal Piperkov$^{b}$, Iliya Bouyukliev$^{b}$ and Stefka Bouyuklieva$^{a}$\\
$^{a}$ Faculty of Mathematics and Informatics,
Veliko Tarnovo University,\\ Veliko Tarnovo, Bulgaria,\\
$^{b}$ Institute of Mathematics and Informatics,\\
Bulgarian Academy of Sciences, Veliko Tarnovo, Bulgaria,\\
e-mail: iliyab@math.bas.bg; stefka@uni-vt.bg}
\date{}
\begin{document}

%
%
%


\maketitle

\begin{abstract}
We present a generalization of Walsh-Hadamard transform that is suitable for applications in Coding Theory, especially for computation of the weight distribution and the covering radius of a linear code over a finite field. The transform used in our research, is a modification of Vilenkin-Chrestenson transform. Instead of using all the vectors in the considered space, we take a maximal set of nonproportional vectors, which reduces the computational complexity.
\end{abstract}

%

\section{Introduction}

Minimum  distance, weight distribution and covering radius are some of the important parameters of the linear codes.  The minimum distance shows how many errors a particular code can correct and how many it can detect. When the maximum-likelihood decoding is performed the covering radius  is the measure of the largest number of errors in any correctable error pattern. Unfortunately, the problems for computing weight distribution, minimum distance and covering radius of a linear code are NP complete \cite{PBB:KaskiOstergard}. Algorithms that solve these problems are usually based on a search  in large sets of vectors \cite{PBB:Betten06,PBB:Cohen97}.

 Connections  between discrete Fourier type transforms, weight distribution and covering radius were established by Karpovsky \cite{PBB:Kar79,PBB:Kar81}. In his research, Karpovsky  considered  mainly binary codes and offered only ideas for the nonbinary  case. The size of the transformation vector in \cite{PBB:Kar81}  is $q^k$. In our study \cite{PBB:Bou20}, in which we calculate the weight distribution of a linear code, we succeeded to reduce the length of the transformed vector to $\theta (q, k) $, using an appropriate new transformation. In this way, the number of required calculations is reduced by about $q-1$ times.

 In this paper, we show that a similar transform can be used to calculate the covering radius of a linear code.
Section \ref{PBB:preliminaries} contains definitions and statements which are important in our research. In Section \ref{PBB:weight} we present a brief description of an algorithm for computing the weight distribution of a linear code based on an Walsh-Hadamard type transform. An algorithm for computing the covering radius of a linear code over a prime field is given in Section \ref{PBB:sec:covering}. The main results for linear codes over a composite field are proved in Section \ref{PBB:sec:coveringlarge}. We end the paper with a short conclusion.

\section{Preliminaries}
\label{PBB:preliminaries}

Let $\mathbb{F}_q^n$ be the $n$-dimensional vector space over the finite field $\mathbb{F}_q$, where $q$ is a prime power.

\subsection{Linear codes}

Every $k$-dimensional subspace $C$ of $\mathbb{F}_q^n$ is called a $q$-ary \textit{linear $[n,k]$
code} (or an $[n,k]_q$-code). The parameters $n$ and $k$ are called the \textit{length} and
\textit{dimension} of $C$, respectively, and the vectors in $C$ are called \textit
{codewords}. The \textit{(Hamming) weight} $\mbox{wt}(x)$ of a vector $x\in\mathbb{F}_q^n$
is the number of its non-zero coordinates. If $A_i$ is the number of
codewords of weight $i$ in $C$, $i=0,1,\dots, n$, then the sequence $(A_0, A_1,
\dots, A_n)$ is called the \textit{weight distribution} of $C$.
Any $k\times n$ matrix $G$, whose rows form a basis of $C$, is called a \textit{generator matrix}
of the code. An $(n-k)\times n$ matrix $H$, that determines the code $C$ in the sense that
\[C=\{x\in\mathbb{F}_q^n | Hx^T=\mathbf{0}\},\]
is called a \textit{parity check matrix} of the code. Note that the rows of $H$ are linearly independent.
For any vector $w\in \mathbb{F}_q^n$, the set $w+C=\{w+x|x\in C\}$ is called a \textit{coset} (or \textit{translate}) of the code. The \textit{weight of a coset} is the smallest weight of a vector in the coset, and any vector of this smallest weight in the coset is called
a \textit{coset leader}. The zero vector is the unique coset leader of the code $C$. The \textit{syndrome} of a vector $y\in\mathbb{F}_q^n$
 with respect to the parity check matrix $H$ is the vector $syn(y)=Hy^T\in\mathbb{F}_q^{n-k}$. Two vectors belong to the same coset if and only if they have the same syndrome \cite[Theorem 1.11.5]{PBB:Huf03}.
 The maximum integer among the weights of the cosets is called the \textit{the covering radius} of the code and denoted by $\rho(C)$. Moreover,
$\rho(C)$ is the smallest number $s$ such that every nonzero syndrome is a linear combination of $s$
or fewer columns of the parity check matrix $H$, and some syndrome requires $s$ columns \cite[Theorem 1.12.5]{PBB:Huf03}.
For more concepts and properties of linear codes we refer to \cite{PBB:Huf03,PBB:MWS}.

We consider only codes of full length, i.e. codes without zero columns in their generator matrices. If $C$ is a linear $[n,k]_q$ code of full length, and $\overline{C}$ is obtained from $C$ by adding $m$ zero coordinates to each codeword, then $\overline{C}$ is a linear code with length $n+m$, the same dimension $k$, the same weight distribution as $C$, and a covering radius $\rho(\overline{C})=\rho(C)+m$. Therefore it is enough to compute the weight distribution and the covering radius of the code $C$ in order to know these parameters for the code $\overline{C}$.

\subsection{Vilenkin-Chrestenson transform}

Let $\xi$ be a primitive complex $q$-th root of unity. We define the Vilenkin-Chrestenson
matrices of order $s$ by recurrence formulae as follows:
\begin{equation}
V_1=\left(\begin{array}{cllcl}
1&1&1&\ldots&1\\
1&\xi&\xi^2&\ldots&\xi^{q-1}\\
1&\xi^2&\xi^4&\ldots&\xi^{2(q-1)}\\
&\vdots&\vdots&&\vdots\\
1&\xi^{q-1}&\xi^{2(q-1)}&\ldots&\xi^{(q-1)^2}\end{array}\right),
\;V_{s+1}=V_1\otimes V_s, \; s\in \mathbb{Z}, \, s\geq 1,
\end{equation}
where $\otimes$ means the Kronecker product. We can consider the elements of the matrix $V_s$ in the following way:
\[V_s=\left(v_{\omega}(x)\right)_{\omega,x\in \mathbb{Z}_q^s},\]
where $\omega=(\omega_1,\ldots,\omega_s)$, $x=(x_1,\ldots,x_s)$, $\mathbb{Z}_q=\{ 0,1,\ldots,q-1\}$,
 $v_{\omega}(x)=\xi^{\omega\cdot x}$ and $\omega\cdot x=\sum_{i=1}^s \omega_ix_i\in \mathbb{Z}$.
 In what follows, we will use some properties of $v_{\omega}(x)$ that follow directly from the definition, namely
\begin{equation}\label{PBB:elementary}
 v_{\omega}(x)=v_{x}(\omega), \ \ \ v_{\omega}(x)v_{\omega}(y)=v_{\omega}(x+y), \ \ \ v_{\omega}(\mathbf{0})=1.
 \end{equation}

 The first property shows that the matrices $V_s$ are symmetric.

\begin{definition}
Let $h:\mathbb{Z}_q^s\to \mathbb{C}$ be a function. The \textit{Vilenkin-Chrestenson transform} of $h$ is a function $\widehat{h}:\mathbb{Z}_q^s\to \mathbb{C}$ defined by
\begin{equation}
\widehat{h}(\omega)=\sum_{x\in\mathbb{Z}_q^s}h(x)v_{\omega}(x),\quad \omega\in\mathbb{Z}_q^s.
\end{equation}
\end{definition}

Detailed information on this transform, as well as on other discrete transforms related to the Fourier transform, can be found in \cite{PBB:Bespalov,PBB:Farkov,PBB:spectral}.

Denote by $TT(h)$ the vector with the values of the function $h$ when the elements of $\mathbb{Z}_q^s$ are ordered lexicographically. This is an analog to the truth table of a Boolean function but here the coordinates of $TT(h)$ are complex numbers. The vectors of the function $h$ and its transform $\widehat{h}$ are connected by the equality
\[TT(\widehat{h})=V_s\cdot TT(h).\]
In this way we reduce Vilenkin-Chrestenson transform to a matrix by vector multiplication.

\section{Weight distribution of linear codes represented by a generator matrix}
\label{PBB:weight}

Let $\mathbb{F}_q=\{\alpha_0=0,\alpha_1=1,\alpha_2,\dots,\alpha_{q-1}\}$. For all positive integers $k$, we define the matrices $G_k$ recursively as follows:
\begin{equation}\label{PBB:Gk}
G_1=(1),\quad G_{k}=\left(\begin{array}{ccccc}\mathbf{0}&\mathbf{\alpha_1}&\ldots&\mathbf{\alpha_{q-1}}&1\\
G_{k-1}&G_{k-1}&\ldots&G_{k-1}&\mathbf{0}^T\end{array}\right),\; k\in\mathbb{Z}, k\geq 2,
\end{equation}
where $\mathbf{u}=(u,\dots,u)=u(1,1,\dots,1)=u.$\textbf{1}, $u\in\mathbb{F}_q$. The size of $G_k$ is $k\times \theta(q,k)$, where $\theta(q,k)=(q^k-1)/(q-1)$, and all columns of the matrix are pairwise linearly independent. Hence the vector-columns in $G_k$ form a maximal set of nonproportional vectors from the vector space $\mathbb{F}_q^k$. Since any such maximal set consists of the representatives of the points in the projective geometry $PG(k-1,q)$, we can say that the columns of the matrix $G_k$ represent all points in the projective geometry $PG(k-1,q)$.

The linear code, generated by the matrix $G_k$, is called a $q$-ary simplex code and denoted by $\mathcal{S}_{q,k}$. This code has length $\theta(q,k)$, dimension $k$ and weight distribution $A_0=1$, $A_{q^{k-1}}=q^k-1$, and $A_i=0$ for $i\neq q^{k-1}$, $1\le i\le \theta(q,k)$ (for more properties of the simplex codes see \cite{PBB:Huf03,PBB:MWS}).

Let $C$ be a linear $[n,k]_q$ code of full length with a generator matrix $G$.

\begin{definition}
The \textit{characteristic vector} of the code $C$ with respect to its generator matrix $G$ is the vector
\begin{equation}\label{PBB:chi}
\chi(C,G)=\left(\chi_1,\chi_2,\ldots,\chi_{\theta(q,k)}\right)\in \mathbb{Z}^{\theta(q,k)}
\end{equation}
where $\chi_u$ is the number of the columns of $G$ that are equal or proportional to the $u$-th column of $G_k$, $u=1,\ldots,\theta(q,k)$.
\end{definition}

When $C$ and $G$ are clear from the context, we write briefly $\chi$. Note that $\sum_{u=1}^{\theta(q,k)}\chi_u=n$, where $n$ is the length of $C$.

A code $C$ can have different characteristic vectors depending on the chosen generator matrices of $C$ and the considered generator matrix $G_k$ of the simplex code $\mathcal{S}_{q,k}$. If we permute the columns of the matrix $G$ and multiply them by nonzero elements of the field, we will obtain a monomially equivalent code to $C$ having the same characteristic vector. Moreover, from a characteristic vector one can restore the columns of the generator matrix $G$ but eventually at different order and/or multiplied by  nonzero elements of the field. This is not a problem for us because the equivalent codes have the same weight distributions.

Further, we consider the matrices $M_k=G_k^{\rm T}\cdot G_k$, $k\in \mathbb N$. We denote by ${\cal N}(M_k)$ the matrix obtained from $M_k$ by replacing all nonzero elements by $1$. The rows of the matrix $G_k^{\rm T}\cdot G$ represent a maximal set of codewords in $C$ that are pairwise nonproportional, and the Hamming weight of the $i$-th row of this matrix (multiplication over $\mathbb{F}_q$) is equal to the $i$-th coordinate of the column vector ${\cal N}(M_k)\cdot \chi^{\rm T}$ (multiplication over $\mathbb Z$), $i=1,\ldots,\theta(q,k)$. Therefore, the coordinates of ${\cal N}(M_k)\cdot \chi^{\rm T}$ provide sufficient information about the weight distribution of the code $C$ \cite[Lemma 1]{PBB:Bou20}.

If $m=(m_1,\ldots,m_{\theta(q,k)})$ is a row-vector in the matrix $M_k$, and $v=(v_1,\ldots,v_{\theta})\in\mathbb{Z}^{\theta}$ is a vector of length $\theta(q,k)$ with integer coordinates, we define the vector $m^{[v]}=(\mu_0,\mu_1,\ldots,\mu_{q-1})$
as follows
\[\mu_u=\sum_{\small{\begin{array}{c} j=1,\ldots,\theta(q,k)\\ \mbox{with}~m_j=\alpha_u\end{array}}} v_j \ \ \mbox{for each} \ u=0,\ldots,q-1.\]
In other words, $m^{[v]}$ can be computed from the vector
$$m'=(\underbrace{m_1,\ldots,m_1}_{v_1},\ldots,\underbrace{m_{\theta},\ldots,m_{\theta}}_{v_\theta}),$$
where there are $\mu_i$ occurrences of $\alpha_i$ in the vector $m'$. According to \cite[p. 142]{PBB:MWS}, $m^{[v]}=\mbox{comp}(m')$ is the composition of $m'$.
The matrix $M_k^{[v]}$ is obtained by replacing each row $m$ of $M_k$ by the corresponding vector $m^{[v]}$. Note that $M_k^{[v]}$ is a $\theta(q,k)\times q$ matrix. If we take $v$ to be the characteristic vector $\chi=\chi(C,G)$ of the linear code $C$ with a generator matrix $G$, then the $i$-th coordinate of
${\cal N}(M_k)\cdot\chi^T$ is equal to $n-\mu_0$ where $m$ is the $i$-th row of $M_k$ and $\mu_0$ is the first coordinate of $m^{[\chi]}$ (see \cite{PBB:Bou20}). The matrix $M_k^{[\chi]}$ is used in the algorithm for calculating the weight distribution of a linear code with characteristic vector $\chi$, presented in \cite{PBB:Bou20}.


\begin{definition}
Let $v\in\mathbb{Z}^{\theta}$ be a vector of length $\theta(q,k)$ with integer coordinates. For any row-vector $m$ in the matrix $M_k$, we define the vector
\[m^{[v]_r}=(\mu_0-\mu_1,\ldots,\mu_0-\mu_{q-1}),\]
where $\mu_0,\mu_1,\ldots,\mu_{q-1}$ are the coordinates of $m^{[v]}$. The matrix $M_k^{[v]_r}$ consists of the vectors $m^{[v]_r}$ as rows. The sum of the columns of $M_k^{[v]_r}$ is called the \textit{reduced distribution} of $v$ and denoted by $r(v)$.
\end{definition}

Note that $M_k^{[v]_r}$ is a $\theta(q,k)\times (q-1)$ matrix, so it has $q^k-1$ entries. Obviously,
\[\left(m^{[v]_r}\right)^T=\left(\begin{array}{rrrrrr}1&-1&0&\ldots&0&0\\1&0&-1&\ldots&0&0\\ \vdots&\vdots&\vdots&\ddots&\vdots&\vdots\\
1&0&0&\ldots&-1&0\\1&0&0&\ldots&0&-1\end{array}\right)\cdot\left(m^{[v]}\right)^T\]

\begin{lemma}\label{PBB:lem:reduced}
The reduced distribution $r(v)$ of a vector $v\in\mathbb{Z}^{\theta}$ is equal to $[(q-1)J-q{\cal N}(M_k)]v^T$
where $J$ is the $\theta\times\theta$ all 1's matrix.
\end{lemma}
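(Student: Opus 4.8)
The plan is to compute the sum of the columns of $M_k^{[v]_r}$ by working row by row, and then to reassemble the resulting quantities into a matrix-vector product. Fix a row $m=(m_1,\ldots,m_\theta)$ of $M_k$, and recall that $m^{[v]}=(\mu_0,\mu_1,\ldots,\mu_{q-1})$ where $\mu_u=\sum_{j:\,m_j=\alpha_u}v_j$. The corresponding row of $M_k^{[v]_r}$ is $(\mu_0-\mu_1,\ldots,\mu_0-\mu_{q-1})$, so summing its $q-1$ entries gives $(q-1)\mu_0-(\mu_1+\cdots+\mu_{q-1})$. The first observation is that $\mu_0+\mu_1+\cdots+\mu_{q-1}=\sum_{j=1}^{\theta}v_j$, since every coordinate $v_j$ is counted exactly once (in the block indexed by $u$ with $\alpha_u=m_j$). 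Hence the sum of the entries of the row equals $(q-1)\mu_0 - \bigl(\sum_j v_j - \mu_0\bigr) = q\mu_0 - \sum_{j=1}^{\theta} v_j$.

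Next I identify $\mu_0$ and $\sum_j v_j$ with the two matrix-vector products appearing in the statement. By definition $\mu_0=\sum_{j:\,m_j=\alpha_0=0}v_j$, i.e. $\mu_0$ is the dot product of $v$ with the $0/1$ indicator of the zero positions of $m$. Now the $i$-th row of ${\cal N}(M_k)$ is the $0/1$ vector obtained from the $i$-th row $m$ of $M_k$ by replacing nonzeros by $1$; its dot product with $v$ therefore counts the nonzero positions, namely $\sum_j v_j - \mu_0$. In other words, if $e$ denotes the $i$-th row, then $(\,{\cal N}(M_k)v^T)_i = \sum_j v_j - \mu_0$, while $(Jv^T)_i = \sum_j v_j$ since every row of $J$ is the all-ones vector. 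Consequently
\[
q\mu_0 - \sum_{j=1}^{\theta} v_j \;=\; q\Bigl(\sum_j v_j - ({\cal N}(M_k)v^T)_i\Bigr) - \sum_j v_j \;=\; (q-1)(Jv^T)_i - q\,({\cal N}(M_k)v^T)_i .
\]

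Finally, assembling these scalar identities over $i=1,\ldots,\theta$ gives exactly $r(v) = (q-1)Jv^T - q\,{\cal N}(M_k)v^T = \bigl[(q-1)J - q\,{\cal N}(M_k)\bigr]v^T$, as claimed. The only point needing a moment of care is the bookkeeping that $\sum_{u=0}^{q-1}\mu_u = \sum_{j=1}^{\theta} v_j$ and the correct interpretation of the $i$-th row of ${\cal N}(M_k)$ as counting \emph{nonzero} positions — i.e. that $({\cal N}(M_k)v^T)_i$ equals $\sum_{j}v_j-\mu_0$ rather than $\mu_0$ itself. Neither is hard; there is no real obstacle, the result is essentially a reorganization of the definitions via the displayed factorization of $(m^{[v]_r})^T$ already recorded before the lemma.
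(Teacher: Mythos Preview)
Your proof is correct and follows essentially the same approach as the paper: both compute the row-sum of $M_k^{[v]_r}$ as $q\mu_0-\sum_j v_j$, identify $\mu_0$ via $({\cal N}(M_k)v^T)_i=\sum_j v_j-\mu_0$ and $(Jv^T)_i=\sum_j v_j$, and then assemble. The only cosmetic difference is that the paper packages the $\mu_0$'s into the first column $c_0$ of $M_k^{[v]}$ and writes the final step as $r(v)=qc_0-Jv^T$, whereas you keep everything componentwise.
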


\begin{proof}
Let $n=\sum_{j=1}^{\theta}v_j$. If $m$ is the $i$-th row of the matrix $M_k$, then the $i$-th coordinate of
${\cal N}(M_k)v^T$ is equal to
$$\sum_{\substack{1\le j\le\theta\\ m_j\neq 0}} v_j=n-\mu_0,$$ where $\mu_0$ is the first element of $m^{[v]}$.
Let $c_0$ be the first column of $M_k^{[v]}$. Since $n$ is the sum of the coordinates of $v$, we have
${\cal N}(M_k)v^T=Jv^T-c_0$ and so $c_0=Jv^T-{\cal N}(M_k)v^T$.

From the other hand, 
for the sum of the coordinates of $m^{[v]_r}$ we have
\[(q-1)\mu_0-\sum_{u=1}^{q-1}\mu_u=q\mu_0-\sum_{u=0}^{q-1}\mu_u=q\mu_0-\sum_{j=1}^{\theta(q,k)}v_j=q\mu_0-n.\]
Therefore the reduced distribution of $v$ is equal to

$r(v)=qc_0-Jv^T=qJv^T-q{\cal N}(M_k)v^T-Jv^T=
\left[(q-1)J-q{\cal N}(M_k)\right]v^T.
$
\end{proof}

\begin{remark} For given positive integers $k$ and $q$ ($q$ is a prime power), the transformation $r$ is defined for any integer valued vector $v$ of length $\theta(q,k)$. As we will show below, the reduced distribution can be used both for computing the weight distribution of a linear code (see \cite{PBB:Bou20}), and for calculating the covering radius.
\end{remark}

\section{Covering radius of a linear code over a prime field}
\label{PBB:sec:covering}

In this section we consider only prime fields, so we set $q$ to be a prime, and $\mathbb{F}_q=\mathbb{Z}_q=\{0,1,\ldots,q-1\}$. Theorem \ref{PBB:tm:cover_Karpovsky} gives a connection between the Vilenkin-Christenson transform and the covering radius of a linear codes.

\subsection{An algorithm using Vilenkin-Chrestenson transform}

To prove the main result (Theorem \ref{PBB:tm:cover_Karpovsky}), we need the following lemma.

 \begin{lemma}\label{PBB:lem:matrix}
 The following equality holds for any $x\in\mathbb{F}_q^s$:
\[
\sum_{\omega\in\mathbb{F}_q^s}v_{\omega}(x)=\left\{\begin{array}{l@{\quad}l}q^s,&\mbox{if} \ x=\mathbf{0},\\
0,&\mbox{if} \ x\neq\mathbf{0}.\end{array}\right.
\]
 \end{lemma}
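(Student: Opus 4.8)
The plan is to exploit the product structure of $v_{\omega}(x)$ coming from the Kronecker product in the definition of $V_s$, which turns the sum over $\F_q^s$ into a product of $s$ one-dimensional sums, each of which is an elementary geometric series. First I would record that, unwinding the recurrence $V_{s+1}=V_1\otimes V_s$ (or equivalently using $v_{\omega}(x)=\xi^{\omega\cdot x}$ with $\omega\cdot x=\sum_{i=1}^s\omega_ix_i$), one has
\[v_{\omega}(x)=\prod_{i=1}^s\xi^{\omega_ix_i}\quad\text{for }\omega=(\omega_1,\dots,\omega_s),\ x=(x_1,\dots,x_s).\]
Consequently the sum factorizes:
\[\sum_{\omega\in\F_q^s}v_{\omega}(x)=\sum_{\omega_1=0}^{q-1}\cdots\sum_{\omega_s=0}^{q-1}\ \prod_{i=1}^s\xi^{\omega_ix_i}=\prod_{i=1}^s\Bigl(\sum_{t=0}^{q-1}\xi^{tx_i}\Bigr).\]

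Next I would evaluate the one-dimensional factor $S(a):=\sum_{t=0}^{q-1}\xi^{ta}=\sum_{t=0}^{q-1}(\xi^{a})^t$ for $a\in\mathbb{Z}_q$. If $a=0$, then $\xi^{a}=1$ and $S(a)=q$. If $a\neq 0$, i.e.\ $1\le a\le q-1$, then $\xi^{a}\neq 1$ because $\xi$ is a primitive complex $q$-th root of unity; summing the finite geometric progression gives
\[S(a)=\frac{(\xi^{a})^{q}-1}{\xi^{a}-1}=\frac{(\xi^{q})^{a}-1}{\xi^{a}-1}=0,\]
since $\xi^{q}=1$. (This is exactly the point where $\F_q=\mathbb{Z}_q$, i.e.\ the prime-field hypothesis, is used to identify the coordinates $x_i$ with integers in $\{0,\dots,q-1\}$.)

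Finally I would combine the two cases. If $x=\mathbf{0}$, every factor $S(x_i)$ equals $q$, so the product equals $q^s$. If $x\neq\mathbf{0}$, there is at least one index $i$ with $x_i\neq 0$, and the corresponding factor $S(x_i)$ vanishes, hence the whole product is $0$; this is precisely the asserted dichotomy. I do not expect any genuine obstacle: the only steps requiring a line of care are the factorization of the sum, which is immediate from the Kronecker-product definition (or from the first two identities in (\ref{PBB:elementary})), and the observation that $\xi^{a}\neq 1$ for $1\le a\le q-1$.
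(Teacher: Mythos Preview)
Your proof is correct. The paper argues by induction on $s$: the base case $s=1$ is exactly your geometric-series computation $S(a)$, and the inductive step writes $x=(u,x')$ and distinguishes the cases $x'=\mathbf{0}$, $x'\neq\mathbf{0}$, pulling off one factor at a time. You instead factorize the full sum into the product $\prod_{i=1}^s S(x_i)$ in one stroke, which is the induction unrolled. The two arguments rest on the same ingredients (the Kronecker/multiplicative structure of $v_\omega(x)$ and the one-dimensional sum), but your direct factorization is a little cleaner and avoids the case split in the inductive step; the paper's inductive formulation, on the other hand, mirrors more closely the later proof of Lemma~\ref{PBB:lem:matrixtr}, where the trace makes a clean full factorization slightly less immediate.
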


 \begin{proof}
 This lemma follows from \cite[Chapter 5, Lemma 9]{PBB:MWS}, but we present here a proof that uses our notations.

 Since $v_{\omega}(\mathbf{0})=1$, $\sum_{\omega\in\mathbb{F}_q^{s}}v_{\omega}(\mathbf{0})=q^{s}$.

 In the case $x\neq\mathbf{0}$ we use induction by $s$. In the base step $s=1$ we have
\[
\sum_{\omega\in\mathbb{F}_q}v_{\omega}(x)=\sum_{\omega=0}^{q-1}\xi^{x\omega}=\sum_{u=0}^{q-1}\xi^{u}=
\frac{1-\xi^{q}}{1-\xi}=0, \;\; \forall x\in\mathbb{F}_q\setminus\{ 0\}.
\]

Suppose that the equality holds for some natural number $s$. Now consider the vectors in $\mathbb{F}_q^{s+1}$ and the matrix $V_{s+1}$. Let $x=(u,x')$, where $u\in\mathbb{F}_q$, $x'\in\mathbb{F}_q^s$, $x\neq\mathbf{0}$.


If $x'=\mathbf{0}$, but $u\neq 0$,
then
\[
  \sum_{\omega\in\mathbb{F}_q^{s+1}}v_{\omega}(x)  =\sum_{i=0}^{q-1}(\xi^{ui}\sum_{\omega\in\mathbb{F}_q^s}v_{\omega}(\mathbf{0}))
    =q^s\sum_{i=0}^{q-1}\xi^{i} =q^s\frac{1-\xi^{q}}{1-\xi}=0.
\]

If $x'\neq\mathbf{0}$, then
\[
  \sum_{\omega\in\mathbb{F}_q^{s+1}}v_{\omega}(x) =\sum_{i=0}^{q-1}(\xi^{ui}\sum_{\omega\in\mathbb{F}_q^s}v_{\omega}(x'))=0.
\]

Since both the base case and the inductive step have been proved as true, by mathematical induction the statement holds for every natural number $s$.
 \end{proof}

Let $C$ be a linear $[n,k]_q$ code with a parity check matrix $H$. Consider the characteristic function of the matrix $H$, defined by
\begin{equation}\label{PBB:eq:h}
h_H(x)=\left\{\begin{array}{l@{\quad}l}1,&\mbox{if\ } x \mbox{\ is a column of\ } \widehat{H},\\
0,& \mbox{otherwise,}\end{array}\right.
\end{equation}
where $\widehat{H}=(H|\alpha_2H|\ldots|\alpha_{q-1}H)$.
We use this characteristic function to compute the covering radius of the code. The following theorem holds for primes $q\ge 3$. A similar result is presented in \cite[Theorem 2]{PBB:Kar81} for the case $q=2$, but Karpovsky considers more functions.

\begin{theorem}\label{PBB:tm:cover_Karpovsky}
Let $C$ be an $[n,k]_q$-code with a parity check matrix $H$, where $q$ is an odd prime,
and $\widehat{h}:\mathbb{F}_q^{n-k}\to \mathbb{C}$ be the Vilenkin-Chrestenson transform of the characteristic function $h=h_H$.
Then the covering radius $\rho(C)$ is equal to the smallest natural number $j$ such that $\widehat{\widehat{h}^j}(y)\neq 0$ for all $y\in\mathbb{F}_q^{n-k}$, $y\neq\mathbf{0}$.
\end{theorem}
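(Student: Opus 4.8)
The plan is to unravel what $\widehat{\widehat{h}^j}(y)$ counts, combinatorially, and then invoke the syndrome characterization of $\rho(C)$ quoted in Section~\ref{PBB:preliminaries}. First I would observe that, by definition of $h=h_H$, we have $\widehat{h}(\omega)=\sum_{x}h(x)v_\omega(x)=\sum_{i=1}^{n}\sum_{u=1}^{q-1}v_\omega(\alpha_u H_i)$, where $H_1,\dots,H_n$ are the columns of $H$; using the multiplicativity $v_\omega(\alpha_u H_i)=v_{\alpha_u\omega}(H_i)$ and that $\alpha_u$ ranges over $\mathbb{F}_q^\ast$, one gets a clean expression of $\widehat{h}(\omega)$ as a sum of $v_\omega(\cdot)$ over all columns of $\widehat H$. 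Then $\widehat{h}(\omega)^j$ expands (using $v_\omega(x)v_\omega(y)=v_\omega(x+y)$) into $\sum v_\omega(x_1+\cdots+x_j)$ over all $j$-tuples $(x_1,\dots,x_j)$ of columns of $\widehat H$. Taking one more transform and applying Lemma~\ref{PBB:lem:matrix}, the sum $\sum_\omega v_\omega(x_1+\cdots+x_j)v_y(\omega)=\sum_\omega v_\omega(x_1+\cdots+x_j+y)$ vanishes unless $x_1+\cdots+x_j=-y$, in which case it contributes $q^{n-k}$. Hence
\[
\widehat{\widehat{h}^j}(y)=q^{n-k}\cdot\#\{(x_1,\dots,x_j):x_i\text{ a column of }\widehat H,\ x_1+\cdots+x_j=-y\}.
\]

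The second step is to translate this count into a statement about linear combinations of columns of $H$. A column of $\widehat H$ is exactly a vector of the form $\alpha_u H_i$ with $\alpha_u\in\mathbb{F}_q^\ast$, so a sum $x_1+\cdots+x_j$ with each $x_\ell$ a column of $\widehat H$ is precisely a linear combination of (at most $j$) columns of $H$ with all coefficients nonzero, i.e.\ an expression $\sum_{\ell} \beta_\ell H_{i_\ell}$ with $\beta_\ell\neq 0$. Since $q$ is odd we may also absorb the sign: $-y$ is a valid syndrome value iff $y$ is. Therefore $\widehat{\widehat{h}^j}(y)\neq 0$ for a given nonzero $y$ if and only if the syndrome $y$ can be written as a sum of exactly $j$ columns of $\widehat H$; and I would note that once $y$ is a linear combination of $s\le j$ columns of $H$ (with nonzero coefficients), padding with repeated columns $\alpha_u H_i,\ \alpha_{q-1-u}H_i$ (whose sum over a pair of opposite scalars is $\mathbf 0$, using $q$ odd) lets one reach exactly $j$ summands whenever $j-s$ is even, and adding a further $H_i+\alpha_u H_i$-type adjustment handles the parity — this padding argument is the point that needs care, since naively "exactly $j$" and "at most $j$" differ.

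The third step assembles the conclusion. By the cited Theorem~1.12.5 of \cite{PBB:Huf03}, $\rho(C)$ is the smallest $s$ such that every nonzero syndrome is a combination of $s$ or fewer columns of $H$. Combining with the padding observation, "every nonzero $y$ is a sum of exactly $j$ columns of $\widehat H$" holds precisely for $j\ge\rho(C)$ (for $j<\rho(C)$ some syndrome needs more than $j$ columns, so its count is zero; for $j\ge\rho(C)$ every syndrome is reachable in $\le j$, hence — by padding, and here is where $q$ odd is essential — in exactly $j$). Thus the smallest $j$ with $\widehat{\widehat{h}^j}(y)\neq0$ for all nonzero $y$ is exactly $\rho(C)$, which is the claim.

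The main obstacle I expect is precisely the \emph{parity/padding} issue in the second and third steps: showing that reachability "in at most $\rho(C)$ columns" upgrades to reachability "in exactly $j$ columns" for all $j\ge\rho(C)$. This is where the hypothesis that $q$ is an odd prime does real work — one uses that $H$ has no zero columns (full length) so that $H_i$ and $-H_i=\alpha_{q-1}H_i$ are both columns of $\widehat H$ and sum to zero, giving free increments of $2$; an extra argument (e.g.\ replacing one summand $\alpha_u H_i$ by the pair $\alpha_{u'}H_i,\alpha_{u''}H_i$ with $\alpha_{u'}+\alpha_{u''}=\alpha_u$, possible when $q\ge 3$) supplies an increment of $1$ to fix parity. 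The transform manipulations in the first step are routine given \eqref{PBB:elementary} and Lemma~\ref{PBB:lem:matrix}.
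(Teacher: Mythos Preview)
Your proposal is correct and follows essentially the same route as the paper's proof: expand $\widehat{h}(\omega)^j$ via \eqref{PBB:elementary}, apply Lemma~\ref{PBB:lem:matrix} to obtain $\widehat{\widehat{h}^j}(y)=q^{n-k}\ell_y$ with $\ell_y$ the number of $j$-tuples of columns of $\widehat H$ summing to $-y$, and finish with the syndrome characterization of $\rho(C)$. The only notable difference is in the padding step you flag as the main obstacle: the paper handles the increment-by-one directly with the single identity $x_j=-\tfrac{q-1}{2}\,x_j-\tfrac{q-1}{2}\,x_j$ (valid since $q$ is odd and $-\tfrac{q-1}{2}\in\mathbb{F}_q^\ast$), which immediately shows that $\widehat{\widehat{h}^j}(y)\neq0$ implies $\widehat{\widehat{h}^{j+1}}(y)\neq0$ and renders your separate ``increment by $2$ then fix parity'' discussion unnecessary.
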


\begin{proof}
Consider the powers of $\widehat{h}(\omega)$, $\omega\in\mathbb{F}_q^{n-k}$.
\begin{eqnarray*}
\left(\widehat{h}(\omega)\right)^j&=&\left(\sum_{x\in\mathbb{F}_q^{n-k}}h(x)v_{\omega}(x)\right)^j\\
&=&\sum_{x_1,\dots,x_j\in\mathbb{F}_q^{n-k}}h(x_1)\ldots h(x_j)v_{\omega}(x_1)\dots v_{\omega}(x_j)\\
&=&\sum_{x_1,\dots,x_j\in\mathbb{F}_q^{n-k}}h(x_1)\ldots h(x_j)v_{\omega}(x_1+ \ldots + x_j).
\end{eqnarray*}
After applying the Vilenkin-Chrestenson transform on the function $\widehat{h}^j$, we have for $y\in\mathbb{F}_q^{n-k}\setminus\{\mathbf{0}\}$
\begin{eqnarray*}
\widehat{\widehat{h}^j}(y)&=&\sum_{\omega\in\mathbb{F}_q^{n-k}}\left(\widehat{h}(\omega)\right)^j v_{y}(\omega)
=\sum_{\omega\in\mathbb{F}_q^{n-k}}\left(\widehat{h}(\omega)\right)^j v_{\omega}(y)
\\
&=&\sum_{\omega\in\mathbb{F}_q^{n-k}}\;v_{\omega}(y)\sum_{x_1,\dots,x_j\in\mathbb{F}_q^{n-k}}h(x_1)\ldots h(x_j)v_{\omega}(x_1+ \ldots+ x_j)\\
&=&\sum_{x_1,\dots,x_j\in\mathbb{F}_q^{n-k}}\;\sum_{\omega\in\mathbb{F}_q^{n-k}}h(x_1)\ldots h(x_j)v_{\omega}(x_1+ \ldots+ x_j+y)\\
&=&\sum_{x_1,\dots,x_j\in\mathbb{F}_q^{n-k}}h(x_1)\ldots h(x_j)\sum_{\omega\in\mathbb{F}_q^{n-k}}v_{\omega}(x_1+ \ldots+ x_j+y).
\end{eqnarray*}
According to Lemma \ref{PBB:lem:matrix}, $\sum_{\omega\in\mathbb{F}_q^{n-k}}v_{\omega}(x_1+ \ldots+ x_j+y)\neq 0$ only if $x_1+ \ldots+ x_j+y=0$.
It turns out that $\widehat{\widehat{h}^j}(y)\neq 0$ if and only if there is at least one tuple $(x_1,\dots,x_j)$ of vectors in $\mathbb{F}_q^{n-k}$ (equal vectors are allowed) such that $h(x_u)\neq 0$ for all $x_u$, $u=1,\ldots,j$, and  $x_1+ \cdots+ x_j=-y$.
In other words, $\widehat{\widehat{h}^j}(y)\neq 0$ if and only if there is a tuple $(x_1,\dots,x_j)$ of columns in the matrix $\widehat{H}$ such that  $x_1+ \cdots+ x_j=-y$. Let $l_y$ be the number of the tuples $(x_1,\dots,x_j)$ of columns in $\widehat{H}$ whose sum is equal to $-y$. Then
$\widehat{\widehat{h}^j}(y)=l_yq^{n-k}$, and $\widehat{\widehat{h}^j}(y)\neq 0$ if and only if $y$ can be represented as a sum of $j$ columns of $\widehat{H}$. The sum $x_1+ \cdots+ x_j$ is a linear combination of at most $j$ of the columns of the parity check matrix $H$, hence we can reformulate the above statement in the following way:
  $\widehat{\widehat{h}^j}(y)\neq 0$ if and only if $y$ is a linear combination of at most $j$ columns of $H$.

It is not difficult to see that if $y$ can be represented as a sum of $j$ columns of $\widehat{H}$ (not necessarily different) then the same vector can be represented as a sum of $j+1$ columns (this conclusion is not valid if $q=2$). Indeed,
if $y=x_1+ \cdots+ x_j$ then $y=x_1+ \cdots+ x_{j-1}-\frac{q-1}{2}x_j-\frac{q-1}{2}x_j$. Therefore, if $\widehat{\widehat{h}^j}(y)\neq 0$ then $\widehat{\widehat{h}^{j+1}}(y)\neq 0$.

If $j<\rho(C)$, then there is a vector $y\in\mathbb{F}_q^{n-k}\setminus\{\mathbf{0}\}$ which is not a linear combination of $j$ columns of $H$ and then $\widehat{\widehat{h}^j}(y)= 0$. In the other hand, if $j\ge\rho(C)$ then any vector $y\in\mathbb{F}_q^{n-k}\setminus\{\mathbf{0}\}$ is a linear combination of at most $j$ columns of $H$ and therefore $\widehat{\widehat{h}^j}(y)\neq 0$ for all $y\in\mathbb{F}_q^{n-k}\setminus\{\mathbf{0}\}$.
\end{proof}

\begin{remark} If $x_1,\ldots,x_j$ are columns in $\widehat{H}$ and $x_1+ \cdots+ x_j=y$ then there exists a vector $w\in \mathbb{F}_q^n$ with $\mbox{wt}(w)\leq j$ such that $y=Hw^T$. Furthermore, it follows that $y$ is the syndrome of the coset $w+C$ and the weight of this coset is at most $j$.
\end{remark}

\begin{remark} The same algorithm can be used for computing the weight distribution of the coset leaders of a linear code over $\mathbb{F}_q$ for an odd prime $q$. If $j\ge 2$ is an  integer, then the number of the coset leaders of weight $j$ is equal to the number of the vectors $y\in\mathbb{F}_q^{n-k}\setminus\{\mathbf{0}\}$ such that $\widehat{\widehat{h}^j}(y)\neq 0$ but $\widehat{\widehat{h}^{j-1}}(y)=0$. The number of the coset leaders of weight 1 is equal to the number of the nonzero vectors $y\in\mathbb{F}_q^{n-k}$ such that $h(y)\neq 0$.
\end{remark}

\subsection{Additional Properties}

We propose some improvements in the computations presented in Theorem~\ref{PBB:tm:cover_Karpovsky}.  Note that the characteristic function $h$ of the matrix $H$ takes only integer values ($0$ and $1$). We consider the case when $q$ is an odd prime and $\mathbb{F}_q=\mathbb{Z}_q$.

Next, we use some properties of the proportionality to reduce the addends in the sum in the Vilenkin-Chrestenson transform of an integer valued function $h:\mathbb{F}_q^s\to\mathbb{Z}$ satisfying the following property: $h(x)=h(ux)$ for all $u\in\mathbb{F}_q\backslash \{0\}$ and $x \in \mathbb{F}_q^s$. Proportionality is an equivalence relation in $\mathbb{F}_q^s$ that partitions the considered set into $\theta+1$ classes, where $\theta=\theta(q,s)$. Only $\{\mathbf 0\}$ contains one element, each of all other classes consists of $q-1$ elements.

Let $e_1,\ldots,e_{\theta}$ be the vectors, corresponding to the columns of the generator matrix $G_s$ of the simplex code as it is defined in \eqref{PBB:Gk}. Note that the elements of the $\theta\times\theta$ matrix $M_s$ are the inner products $e_i\cdot e_j\in\mathbb{F}_q$. Then
\begin{equation}\label{PBB:eq0}
\widehat{h}(\mathbf{0})=\sum_{x\in\mathbb{F}_q^s} h(x) v_{\mathbf{0}}(x)=\sum_{x\in\mathbb{F}_q^s} h(x)= h(\mathbf{0})+(q-1)\sum_{i=1}^\theta h(e_i)
\end{equation}
and
$$\widehat{h}(e_i)=\sum_{x\in\mathbb{F}_q^s} h(x) v_{e_i}(x)=h(\mathbf{0})+\sum_{j=1}^\theta\sum_{u=1}^{q-1}h(e_j)v_{e_i}(ue_j)$$
\begin{equation}\label{PBB:eq_neq0}
= h(\mathbf{0})+\sum_{j=1}^\theta h(e_j)\sum_{u=1}^{q-1}(\xi^{e_i\cdot e_j})^u.
\end{equation}

\begin{lemma}
Let $h:\mathbb{Z}_q^s\to \mathbb{Z}$ be a function with the property $h(x)=h(ux)$ for all $u\in\mathbb{Z}_q\backslash \{0\}$ and $x \in \mathbb{Z}_q^s$.
If $\widehat{h}:\mathbb{Z}_q^s\to \mathbb{C}$ is the Vilenkin-Chrestenson transform of $h$ then $\widehat{h}$ is actually an integer valued function and
$\widehat{h}(\omega)=\widehat{h}(u\omega)$ for all $u\in\mathbb{Z}_q\backslash \{0\}$ and $\omega \in \mathbb{Z}_q^s$.
\end{lemma}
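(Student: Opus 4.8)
The plan is to exploit the two displayed formulas \eqref{PBB:eq0} and \eqref{PBB:eq_neq0} just above the statement, which already express $\widehat{h}(\mathbf 0)$ and $\widehat{h}(e_i)$ in terms of the values $h(\mathbf 0),h(e_1),\dots,h(e_\theta)$, all of which are integers by hypothesis. The whole argument splits into two assertions: (a) $\widehat h$ takes only integer values, and (b) $\widehat h$ is constant on proportionality classes.

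For (b), I would argue directly from the definition of the transform together with the elementary identities \eqref{PBB:elementary}. Fix $u\in\mathbb{Z}_q\setminus\{0\}$. Then
\[
\widehat h(u\omega)=\sum_{x\in\mathbb Z_q^s}h(x)v_{u\omega}(x)=\sum_{x\in\mathbb Z_q^s}h(x)v_{\omega}(ux),
\]
using $v_{u\omega}(x)=\xi^{(u\omega)\cdot x}=\xi^{\omega\cdot(ux)}=v_{\omega}(ux)$. Now substitute $x=u^{-1}y$ (a bijection of $\mathbb Z_q^s$ onto itself, since $u$ is invertible modulo the prime $q$), so that $ux=y$, and invoke the hypothesis $h(u^{-1}y)=h(y)$ to get $\sum_y h(y)v_\omega(y)=\widehat h(\omega)$. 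Hence $\widehat h(u\omega)=\widehat h(\omega)$.

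For (a), the cleanest route is to use the fact, just established in (b), that $\widehat h$ is itself proportionality-invariant, and then apply formulas of the same shape as \eqref{PBB:eq0}--\eqref{PBB:eq_neq0} to $\widehat h$ in place of $h$; but more simply, it suffices to show each value $\widehat h(\omega)$ is a sum of integers times quantities like $\sum_{u=1}^{q-1}\zeta^u$ for $q$-th roots of unity $\zeta$. By proportionality invariance of $h$, for any $\omega$ we can group the sum over $x$ by proportionality class exactly as in \eqref{PBB:eq_neq0}: $\widehat h(\omega)=h(\mathbf 0)+\sum_{j=1}^\theta h(e_j)\sum_{u=1}^{q-1}(\xi^{\omega\cdot e_j})^u$. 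For each $j$, the inner sum $\sum_{u=1}^{q-1}(\xi^{\omega\cdot e_j})^u$ equals $q-1$ if $\omega\cdot e_j=0$ and equals $-1$ otherwise (being $\frac{1-\xi^{q\cdot(\omega\cdot e_j)}}{1-\xi^{\omega\cdot e_j}}-1=-1$ when $\xi^{\omega\cdot e_j}\ne 1$), so each inner sum is an integer, and therefore $\widehat h(\omega)\in\mathbb Z$.

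The only mild subtlety — and the step I would be most careful with — is the reindexing $x\mapsto u^{-1}x$ in part (b): one must note that because $q$ is prime every nonzero $u$ is invertible in $\mathbb Z_q$, so scaling by $u$ really is a permutation of $\mathbb Z_q^s$, and the hypothesis $h(x)=h(ux)$ is then exactly what is needed to see the sum is unchanged. Everything else is a routine geometric-series evaluation of $\sum_{u=1}^{q-1}\zeta^u$ for a $q$-th root of unity $\zeta$, already carried out in the proof of Lemma~\ref{PBB:lem:matrix}.
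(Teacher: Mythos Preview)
Your proposal is correct and follows essentially the same approach as the paper: first establish proportionality invariance of $\widehat h$ via the reindexing $x\mapsto ux$ (the paper writes this as ``if $x$ traverses $\mathbb{Z}_q^s$, so does $ux$''), then deduce integer-valuedness from the grouping formula \eqref{PBB:eq_neq0} together with the geometric-series evaluation $\sum_{u=1}^{q-1}(\xi^{e_i\cdot e_j})^u\in\{q-1,-1\}$, which is exactly the paper's \eqref{PBB:eq_sum}. The only cosmetic difference is that the paper checks integrality only at $\mathbf 0$ and the representatives $e_i$ and relies on part (b) to cover all $\omega$, whereas you verify it directly for arbitrary $\omega$; the computation is the same either way.
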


\begin{proof}
If $u\in\mathbb{Z}_q$, $u\neq 0$, then
$$\widehat{h}(u\omega)=\sum_{x\in\mathbb{Z}_q^s}h(x)v_{u\omega}(x)=\sum_{x\in\mathbb{Z}_q^s}h(x)v_{\omega}(ux)=\sum_{x\in\mathbb{Z}_q^s}h(ux)v_{\omega}(ux)=\widehat{h}(\omega).$$
The last equality holds because if $x$ traverses the set $\mathbb{Z}_q^s$, the same goes for $ux$ for a fixed $u\neq 0$.

To prove that $\widehat{h}$ is an integer valued function, we use \eqref{PBB:eq0} and \eqref{PBB:eq_neq0}.  Obviously, $\widehat{h}(\mathbf{0})\in\mathbb{Z}$. To prove the same for $\widehat{h}(e_i)$, we use that
\begin{equation}\label{PBB:eq_sum}
\sum_{u=1}^{q-1}(\xi^{e_i\cdot e_j})^u=\left\{\begin{array}{rl}
q-1,&\mathrm{if} \ e_i\cdot e_j=0,\\
-1,&\mathrm{if} \ e_i\cdot e_j\neq 0.
\end{array}\right.
\end{equation}
Hence $h(\mathbf{0})$, $h(e_j)$ and $\sum_{u=1}^{q-1}(\xi^{e_i\cdot e_j})^u$ in \eqref{PBB:eq_neq0} are integers, so the values of $\widehat{h}$ are integers.
\end{proof}

If we take $h$ to be the characteristic function of the linear code $C$ with parity check matrix $H$, defined in \eqref{PBB:eq:h}, we obtain the following corollary.

\begin{corollary}\label{PBB:cor1}
Let $C$ be an $[n,k]_q$-code with a parity check matrix $H$, where $q$ is an odd prime,
and $\widehat{h}:\mathbb{F}_q^{n-k}\to \mathbb{C}$ be the Vilenkin-Chrestenson transform of the characteristic function $h=h_H$.
Then the covering radius $\rho(C)$ is equal to the smallest natural number $j$ such that $\widehat{\widehat{h}^j}(e_i)\neq 0$ for all $i=1,\ldots,\theta(q,n-k)$.
\end{corollary}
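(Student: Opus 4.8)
The plan is to show that Corollary~\ref{PBB:cor1} follows from Theorem~\ref{PBB:tm:cover_Karpovsky} together with the previous lemma, which tells us that $\widehat{h}$ is integer valued and constant on proportionality classes. First I would observe that, since $h=h_H$ satisfies $h(x)=h(ux)$ for all $u\in\mathbb{F}_q\setminus\{0\}$ (because $\widehat{H}=(H\mid\alpha_2 H\mid\ldots\mid\alpha_{q-1}H)$ contains a column $x$ exactly when it contains all its nonzero multiples), the hypothesis of the lemma is met. Hence $\widehat{h}(\omega)=\widehat{h}(u\omega)$ for every $u\neq 0$. The key step is then to propagate this property one transform further: I would verify that $\widehat{h}^{\,j}$ also satisfies $\widehat{h}^{\,j}(\omega)=\widehat{h}^{\,j}(u\omega)$ (immediate, as a pointwise power of a function with that symmetry), and therefore, applying the lemma a second time with $h$ replaced by $\widehat{h}^{\,j}$, the function $\widehat{\widehat{h}^{\,j}}$ is itself constant on proportionality classes, i.e. $\widehat{\widehat{h}^{\,j}}(y)=\widehat{\widehat{h}^{\,j}}(uy)$ for all $u\in\mathbb{F}_q\setminus\{0\}$.

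Once that invariance is established, the corollary is essentially a bookkeeping statement. Every nonzero vector $y\in\mathbb{F}_q^{n-k}$ is proportional to exactly one of the representatives $e_1,\ldots,e_{\theta(q,n-k)}$ coming from the columns of $G_{n-k}$, since these form a maximal set of pairwise nonproportional vectors (as recalled after~\eqref{PBB:Gk}). Thus the condition ``$\widehat{\widehat{h}^{\,j}}(y)\neq 0$ for all $y\in\mathbb{F}_q^{n-k}\setminus\{\mathbf{0}\}$'' in Theorem~\ref{PBB:tm:cover_Karpovsky} is equivalent to ``$\widehat{\widehat{h}^{\,j}}(e_i)\neq 0$ for all $i=1,\ldots,\theta(q,n-k)$''. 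Since $\rho(C)$ is characterized as the smallest $j$ for which the first condition holds, it is equally the smallest $j$ for which the second holds, which is exactly the claim.

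The only point requiring a little care—and the main obstacle, if there is one—is confirming that the lemma is genuinely applicable to $\widehat{h}^{\,j}$: the lemma is stated for an integer valued $h$, and we need $\widehat{h}^{\,j}$ to be integer valued before invoking it. This follows because the previous lemma already shows $\widehat{h}$ is integer valued, so its $j$-th power is too. With that noted, the argument is a direct reduction and no further computation is needed; in particular we need not re-derive the formula $\widehat{\widehat{h}^{\,j}}(y)=l_y q^{n-k}$ from the proof of the theorem, though it gives an alternative route (the count $l_y$ plainly depends only on the proportionality class of $y$, since if $x_1+\cdots+x_j=-y$ then $ux_1+\cdots+ux_j=-uy$ and the $ux_i$ are again columns of $\widehat{H}$).
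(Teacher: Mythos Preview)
Your proposal is correct and follows essentially the same route as the paper's own proof: observe $h(ux)=h(x)$, invoke the lemma to get $\widehat{h}(u\omega)=\widehat{h}(\omega)$, note that this passes to the pointwise power $\widehat{h}^{\,j}$, apply the lemma again to obtain $\widehat{\widehat{h}^{\,j}}(uy)=\widehat{\widehat{h}^{\,j}}(y)$, and then reduce the condition of Theorem~\ref{PBB:tm:cover_Karpovsky} to the representatives $e_1,\ldots,e_{\theta(q,n-k)}$. The paper's proof is a single terse line stating exactly these implications; your version is simply more explicit (in particular, your remark that $\widehat{h}^{\,j}$ is integer valued so the lemma applies a second time is a detail the paper leaves implicit).
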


\begin{proof}
Obviously, $h(ux)=h(x)$ for $u\in\mathbb{F}_q\setminus\{0\}$, $x\in\mathbb{F}_q^{n-k}$. Hence $\widehat{h}(ux)=\widehat{h}(x)$ and $\widehat{\widehat{h}^j}(ux)=\widehat{\widehat{h}^j}(x)$, for $j\in\mathbb{Z}$, $j\ge 1$.
\end{proof}

It turns out that it is enough to compute $\widehat{\widehat{h}^j}(\mathbf{0})$ and $\widehat{\widehat{h}^j}(e_i)$, $i=1,\ldots,\theta(q,n-k)$.
Using \eqref{PBB:eq_sum} we obtain that
\[\widehat{h}(ue_i)=\widehat{h}(e_i)=h(\mathbf{0})+\sum_{j=1}^{\theta(q,s)}r_{ij}h(e_j)\]
where
{\small
\[r_{ij}=\left\{\begin{array}{rl}
q-1,&\mathrm{if} \ e_i\cdot e_j=0,\\
-1,&\mathrm{if} \ e_i\cdot e_j\neq 0.
\end{array}\right.\]
In other words, if $R$ is $\theta(q,s)\times\theta (q,s)$ matrix $(r_{ij})$ then
\[\left(\begin{array}{l}\widehat{h}(\mathbf{0})\\ \widehat{h}(e_1)\\ \vdots \\ \widehat{h}(e_{\theta})\end{array}\right)=
\left(\begin{array}{l|lll}
1&(q-1)&\ldots&(q-1)\\
\hline
1&\\
\vdots&&R\\
1&
\end{array}
\right)
\left(\begin{array}{l}h(\mathbf{0})\\ h(e_1)\\ \vdots \\ h(e_{\theta})\end{array}\right)
=
\left(\begin{array}{c}
h(\mathbf{0})+(q-1)\sum_{i=1}^{\theta}h(e_i)\\
h(\mathbf{0})\left(\begin{array}{c}
1\\
\vdots\\
1
\end{array}
\right)+R\left(\begin{array}{c}
h(e_1)\\
\vdots\\
h(e_{\theta})
\end{array}
\right)\end{array}
\right)\]}

The matrix $R$ can be obtained from the matrix $M_s$ by replacing all nonzero elements by $-1$ and all zero elements by $(q-1)$. One can see that $R=(q-1)J-q{\cal N}(M_s)$ where $J$ is the $\theta(q,s)\times\theta(q,s)$ all 1's matrix.
This means that $R$ is the transform matrix of the reduced distribution of the vector $(h(e_1),\ldots,h(e_{\theta}))$ (see Lemma \ref{PBB:lem:reduced}) and we can apply the same calculation technique as in \cite{PBB:Bou20}. If $v_h=(h(e_1),\ldots,h(e_\theta))$ then
\begin{equation}\label{PBB:eqtt}
\left(\begin{array}{l}\widehat{h}(\mathbf{0})\\ \widehat{h}(e_1)\\ \vdots \\ \widehat{h}(e_{\theta})\end{array}\right)=
\left(\begin{array}{c}
\widehat{h}(\mathbf{0})\\
h(\mathbf{0})\mathbf{1}^T+Rv_h^T
\end{array}
\right)=
\left(\begin{array}{c}\widehat{h}(\mathbf{0})\\
h(\mathbf{0})\mathbf{1}^T+r(v_h)\end{array}\right).
\end{equation}

The algorithms, described in \cite{PBB:Bou20}, are related to butterfly networks and diagrams and
have very efficient natural implementations with SIMD model of parallelization
especially with the CUDA platform. They are used for computing the weight distribution of a linear code represented by its characteristic vector with respect to a generator matrix. The complexity is $O(kq^k)$ for a prime $q$.

We end this section with two examples. The first example illustrates the improvements in the calculations of the Vilenkin-Chrestenson transform. The second example gives an application of the proposed method for calculating the covering radius of a ternary linear code.

\begin{example}
For $q=3$ and $s=2$, the function $h:\mathbb{F}_3^2\to \mathbb{Z}$ is defined as follows:

\begin{center}
\begin{tabular}{|r|rrrrrrrrr|}
\hline
&0&0&0&1&1&1&2&2&2\\
$x^T$&0&1&2&0&1&2&0&1&2\\
\hline
$h(x)$&$a$&$b$&$b$&$c$&$d$&$e$&$c$&$e$&$d$\\
\hline
\end{tabular}
\end{center}

Then we have $TT(\widehat{h})=V_2\cdot TT(h)$, namely
\[TT(\widehat{h})=\left(\begin{array}{lllllllll}
1&1&1&1&1&1&1&1&1\\
1&\xi&\xi^2&1&\xi&\xi^2&1&\xi&\xi^2\\
1&\xi^2&\xi&1&\xi^2&\xi&1&\xi^2&\xi\\
1&1&1&\xi&\xi&\xi&\xi^2&\xi^2&\xi^2\\
1&\xi&\xi^2&\xi&\xi^2&1&\xi^2&1&\xi\\
1&\xi^2&\xi&\xi&1&\xi^2&\xi^2&\xi&1\\
1&1&1&\xi^2&\xi^2&\xi^2&\xi&\xi&\xi\\
1&\xi&\xi^2&\xi^2&1&\xi&\xi&\xi^2&1\\
1&\xi^2&\xi&\xi^2&\xi&1&\xi&1&\xi^2
\end{array}\right)\cdot\left(\begin{array}{c}a\\b\\b\\c\\d\\e\\c\\e\\d\end{array}\right)
=
\left(\begin{array}{rrrrr}
a&+2b&+2c&+2d&+2e\\
a&-b&+2c&-d&-e\\
a&-b&+2c&-d&-e\\
a&+2b&-c&-d&-e\\
a&-b&-c&-d&+2e\\
a&-b&-c&+2d&-e\\
a&+2b&-c&-d&-e\\
a&-b&-c&+2d&-e\\
a&-b&-c&-d&+2e\\
\end{array}\right)
.\]

Next we have
\[\mbox{
\begin{tabular}{|r|r|rrrrr|}
$x/\omega$&$h(x)$&\multicolumn{5}{c|}{$\widehat{h}(\omega)$}\\
\hline
$00$&$a$&$a$&$+2b$&$+2d$&$+2e$&$+2c$\\
$01$&$b$&$a$&$-b$&$-d$&$-e$&$+2c$\\
$11$&$d$&$a$&$-b$&$-d$&$+2e$&$-c$\\
$21$&$e$&$a$&$-b$&$+2d$&$-e$&$-c$\\
$10$&$c$&$a$&$+2b$&$-d$&$-e$&$-c$\\
\hline
\end{tabular}
}
\]
So the transform matrix is
\[\left(\begin{array}{rrrrr}1&2&2&2&2\\1&-1&-1&-1&2\\1&-1&-1&2&-1\\1&-1&2&-1&-1\\1&2&-1&-1&-1\end{array}\right), \mbox{\ while\ }{\cal N}(M_2)=\left(\begin{array}{rrrr}1&1&1&0\\1&1&0&1\\1&0&1&1\\0&1&1&1\end{array}\right).\]
\end{example}

\begin{example}\label{PBB:example}
Let $C$ be a linear ternary $[6,3]$ code with a parity check matrix
\[H=\left(\begin{array}{cccccc}
0&0&2&1&0&0\\
0&1&0&0&1&0\\
1&0&0&0&0&1\\
\end{array}\right)\]
We present the calculations in Table \ref{PBB:table-1}. Hence for this code $\rho(C)=3$. For the weight distribution of the coset leaders we have: 6 leaders of weight 1, 12 leaders of weight 2, and 8 leaders of weight 3.

\begin{table}
\caption{Calculations for Example \ref{PBB:example}.}
\begin{center}
{\begin{tabular}{c|c|r|r|r|r|r}
\hline\noalign{\smallskip}
$x/\omega /y$&$h(x)$&$\widehat{h}(\omega)$&$\widehat{h}^2(\omega)$&$\widehat{\widehat{h}^2}(y)$&$\widehat{h}^3(\omega)$&$\widehat{\widehat{h}^3}(y)$\\
\noalign{\smallskip}\hline\noalign{\smallskip}
0 0 0& 0& 6& 36& 162 & 216& 162\\
0 0 1& 1& 3& 9& 27& 27& 405\\
0 1 1& 0& 0& 0& 54& 0& 162\\
0 2 1& 0& 0& 0& 54& 0& 162\\
0 1 0& 1& 3& 9& 27& 27& 405\\
1 0 1& 0& 0& 0& 54& 0& 162\\
1 1 1& 0& $-3$& 9& 0& $-27$& 162\\
1 2 1& 0& $-3$& 9& 0& $-27$& 162\\
1 1 0& 0& 0& 0& 54& 0& 162\\
2 0 1& 0& 0& 0& 54& 0& 162\\
2 1 1& 0& $-3$& 9& 0& $-27$& 162\\
2 2 1& 0& $-3$& 9& 0& $-27$& 162\\
2 1 0& 0& 0& 0& 54& 0& 162\\
1 0 0& 1& 3& 9& 27& 27& 405\\
\noalign{\smallskip}\hline
\end{tabular}}
\end{center}
\label{PBB:table-1}
\end{table}
\end{example}

\section{Covering radius of linear codes over a composite field}
\label{PBB:sec:coveringlarge}


In this section we consider composite fields, so we set $q=p^l$ where $p$ is a prime, $l\ge 2$ is a positive integer, and $\mathbb{F}_p=\mathbb{Z}_p=\{0,1,\ldots,p-1\}$. Results in the previous section can be reformulated for a composite fields using a similar transform. Instead of the inner product in Vilenkin-Chrestenson transform we use the trace of the inner product of the input vectors \cite{PBB:Kar81,PBB:Assmus}.

Let $\zeta$ be a complex primitive $p$-th root of unity. Let
\begin{equation}\label{PBB:eq2}
\tau_{\omega}(x)=\zeta^{\mbox{Tr}(\omega\cdot x)}
\end{equation}
for any $\omega,x\in \mathbb{F}_q^s$, where $\mbox{Tr}$ is the trace map from $\mathbb{F}_q$ to $\mathbb{F}_p$. The $q^s\times q^s$ matrix $T_s=(\tau_{\omega}(x))$ determines the transform. We use \eqref{PBB:eq2} to define a Fourier type transform \cite{PBB:Assmus} called Trace transform.

\begin{definition}
Let $\mathbb{F}_q$ be a finite field with $q$ elements, $q=p^l$ for a prime $p$, and $\zeta$ be a primitive complex $p$-th root of unity.
The \textit{Trace transform} of the function $h:\mathbb{F}_q^s\to \mathbb{C}$ is a function $\widehat{h}:\mathbb{F}_q^s\to \mathbb{C}$ defined by
\begin{equation}
\widehat{h}(\omega)=\sum_{x\in\mathbb{F}_q^s}h(x)\tau_{\omega}(x)=\sum_{x\in\mathbb{F}_q^s}h(x)\zeta^{\mbox{Tr}(\omega\cdot x)},\quad \omega\in\mathbb{F}_q^s.
\end{equation}
\end{definition}

From the symmetry and linearity of the inner product and the trace map we have
\begin{equation}\label{PBB:elementary_tau}
 \tau_{\omega}(x)=\tau_{x}(\omega), \ \ \ \tau_{\omega}(x)\tau_{\omega}(y)=\tau_{\omega}(x+y).
 \end{equation}
 \begin{lemma}\label{PBB:lem:matrixtr}
 For $x\in\mathbb{F}_q^s$ the following equality holds
 \[
\sum_{\omega\in\mathbb{F}_q^s}\tau_{\omega}(x)=\left\{\begin{array}{l@{\quad}l}q^s,&\mbox{if} \ x=\mathbf{0},\\
0,&\mbox{if} \ x\neq\mathbf{0}.\end{array}\right.
\]
 \end{lemma}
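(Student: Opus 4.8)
The plan is to mirror the proof of Lemma~\ref{PBB:lem:matrix}, replacing the character $v_\omega$ by $\tau_\omega$ and replacing the geometric-series argument over $\mathbb{F}_q$ by the corresponding fact about the additive character $\zeta^{\mathrm{Tr}(\cdot)}$ of $\mathbb{F}_q$. First I would dispose of the case $x=\mathbf{0}$ immediately: since $\mathrm{Tr}(\omega\cdot\mathbf{0})=\mathrm{Tr}(0)=0$, we have $\tau_\omega(\mathbf{0})=\zeta^0=1$ for every $\omega$, so the sum equals $q^s$.

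For the case $x\neq\mathbf{0}$, I would first establish the single-variable statement, namely that for a fixed nonzero $a\in\mathbb{F}_q$,
\[
\sum_{t\in\mathbb{F}_q}\zeta^{\mathrm{Tr}(ta)}=0.
\]
This is the standard fact that the additive character $\psi(b)=\zeta^{\mathrm{Tr}(b)}$ of $\mathbb{F}_q$ is nontrivial and hence sums to zero over the group $(\mathbb{F}_q,+)$: because $a\neq 0$, as $t$ runs over $\mathbb{F}_q$ so does $ta$, so the sum equals $\sum_{b\in\mathbb{F}_q}\zeta^{\mathrm{Tr}(b)}$; and this vanishes because the trace map $\mathrm{Tr}:\mathbb{F}_q\to\mathbb{F}_p$ is surjective and $p^{l-1}$-to-one, whence $\sum_{b\in\mathbb{F}_q}\zeta^{\mathrm{Tr}(b)}=p^{l-1}\sum_{c\in\mathbb{F}_p}\zeta^{c}=p^{l-1}\cdot\frac{1-\zeta^p}{1-\zeta}=0$. (Alternatively one can pick $b_0$ with $\mathrm{Tr}(b_0)\neq 0$ and use the translation-invariance of the sum to force it to be zero.)

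Given this single-variable fact, I would run induction on $s$ exactly as in Lemma~\ref{PBB:lem:matrix}. The base case $s=1$ is the displayed identity above with $a=x\neq 0$. For the inductive step, write $x=(u,x')$ with $u\in\mathbb{F}_q$, $x'\in\mathbb{F}_q^s$, $x\neq\mathbf{0}$, and split the outer sum over $\omega=(\omega_1,\omega')\in\mathbb{F}_q^{s+1}$ as a double sum. Using $\mathrm{Tr}(\omega\cdot x)=\mathrm{Tr}(\omega_1 u)+\mathrm{Tr}(\omega'\cdot x')$ and hence $\tau_\omega(x)=\zeta^{\mathrm{Tr}(\omega_1 u)}\,\tau_{\omega'}(x')$, we get
\[
\sum_{\omega\in\mathbb{F}_q^{s+1}}\tau_\omega(x)
=\Bigl(\sum_{\omega_1\in\mathbb{F}_q}\zeta^{\mathrm{Tr}(\omega_1 u)}\Bigr)\Bigl(\sum_{\omega'\in\mathbb{F}_q^{s}}\tau_{\omega'}(x')\Bigr).
\]
If $x'\neq\mathbf{0}$, the second factor is $0$ by the inductive hypothesis; if $x'=\mathbf{0}$ then necessarily $u\neq 0$, the second factor is $q^s$, and the first factor is $0$ by the single-variable identity. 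In either case the product is $0$, completing the induction.

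I do not expect a genuine obstacle here; the only point requiring a little care is the nontriviality of the additive character $b\mapsto\zeta^{\mathrm{Tr}(b)}$ on $\mathbb{F}_q$ (equivalently, surjectivity of $\mathrm{Tr}$), which is what makes the single-variable sum vanish and is precisely the place where the composite-field argument differs from the prime-field one in Lemma~\ref{PBB:lem:matrix}. Everything else is the same Kronecker-product / separation-of-variables bookkeeping, now justified by \eqref{PBB:elementary_tau} in place of \eqref{PBB:elementary}.
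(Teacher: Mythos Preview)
Your proposal is correct and follows essentially the same approach as the paper: induction on $s$, with the base case handled via the surjectivity (and $p^{l-1}$-to-one property) of the trace map, and the inductive step by factoring the sum as a product using $\mathrm{Tr}(\omega\cdot x)=\mathrm{Tr}(\omega_1 u)+\mathrm{Tr}(\omega'\cdot x')$. The only cosmetic difference is that you dispose of $x=\mathbf{0}$ up front and split the inductive step into the two subcases $x'\neq\mathbf{0}$ versus $x'=\mathbf{0}$, whereas the paper treats both simultaneously by observing that the product vanishes unless both factors are nonzero.
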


 \begin{proof} This lemma is a modification of \cite[Chapter 5, Lemma 9]{PBB:MWS} but we give its proof for completeness. We use induction by $s$. In the base step $s=1$ we have
 \[
\sum_{\omega\in\mathbb{F}_q}\tau_{\omega}(x)=\sum_{\omega\in\mathbb{F}_q}\zeta^{Tr(x\omega)}=\left\{\begin{array}{ll}q,&\mbox{if} \ x=0,\\
0,&\mbox{if} \ x\neq 0.\end{array}\right.
\]
We will give some arguments for the case $x\neq 0$. When $\omega$ goes through $\mathbb{F}_q$ the multiplication $x\omega$ goes through all the elements of $\mathbb{F}_q$.
The trace is a linear map onto $\mathbb{F}_p$ with a kernel of $p^{l-1}$ elements. So
\[\sum_{\omega\in\mathbb{F}_q}\zeta^{\mbox{Tr}(x\omega)}=\sum_{\omega\in\mathbb{F}_q}\zeta^{\mbox{Tr}(\omega)}=p^{l-1}\sum_{\omega\in\mathbb{F}_p}\zeta^{\omega}=p^{l-1}\frac{1-\zeta^p}{1-\zeta}=0.\]

Suppose that the equality holds for some natural number $s$. Consider the vector $x=(u,x^{\prime})\in\mathbb{F}_q^{s+1}$, $u\in\mathbb{F}_q$, $x^{\prime}\in\mathbb{F}_q^s$. Because of the linearity of the trace map we have
\begin{eqnarray}
\sum_{\omega\in\mathbb{F}_q^{s+1}}\tau_{\omega}(x)&=&
\sum_{\omega\in\mathbb{F}_q^{s+1}}\zeta^{\mbox{Tr}(\omega\cdot x)}=
\sum_{i\in\mathbb{F}_q}\sum_{\omega^{\prime}\in\mathbb{F}_q^{s}}\zeta^{\mbox{Tr}(iu+\omega^{\prime}\cdot x^{\prime})}\nonumber\\
&=&\sum_{i\in\mathbb{F}_q}\sum_{\omega^{\prime}\in\mathbb{F}_q^{s}}\zeta^{\mbox{Tr}(iu)+\mbox{Tr}(\omega^{\prime}\cdot x^{\prime})}
=\sum_{i\in\mathbb{F}_q}\sum_{\omega^{\prime}\in\mathbb{F}_q^{s}}\zeta^{\mbox{Tr}(iu)}\zeta^{\mbox{Tr}(\omega^{\prime}\cdot x^{\prime})}\nonumber\\
&=&
\left(\sum_{i\in\mathbb{F}_q}\zeta^{\mbox{Tr}(iu)}\right)\left(\sum_{\omega^{\prime}\in\mathbb{F}_q^{s}}\zeta^{\mbox{Tr}(\omega^{\prime}\cdot x^{\prime})}\right)\nonumber\\
&=&\left(\sum_{i\in\mathbb{F}_q}\tau_{i}(u)\right)\left(\sum_{\omega^{\prime}\in\mathbb{F}_q^{s}}\tau_{\omega^{\prime}}(x^{\prime})\right)\label{PBB:tauProd}
\end{eqnarray}
The induction hypothesis and the base step give us that the sum above is nonzero iff $u= 0$ and $x^{\prime}=\mathbf{0}$. In this case the sum is $q.q^s=q^{s+1}$.

Since both the base case and the inductive step have been proved as true, by mathematical induction the statement holds for every natural number $s$.
 \end{proof}

The equation~\eqref{PBB:tauProd} shows that the matrices $T_s$ are connected by a Kroneker product, and $T_{s+1}=T_1\otimes T_s$.


Let $C$ be a linear $[n,k]_q$ code with a parity check matrix $H$.

\begin{theorem}\label{PBB:tm:cover_Karp_comp_odd}
Let $C$ be an $[n,k]_q$-code with a parity check matrix $H$, where $q=p^l$ for an odd prime $p$,
and $\widehat{h}:\mathbb{F}_q^{n-k}\to \mathbb{C}$ be the Trace transform of the characteristic function $h=h_H$.
Then the covering radius $\rho(C)$ is equal to the smallest natural number $j$ such that $\widehat{\widehat{h}^j}(y)\neq 0$ for all $y\in\mathbb{F}_q^{n-k}$, $y\neq\mathbf{0}$.
\end{theorem}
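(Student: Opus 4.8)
The plan is to mimic the proof of Theorem~\ref{PBB:tm:cover_Karpovsky} verbatim, replacing $v_\omega$ by $\tau_\omega$ throughout and invoking Lemma~\ref{PBB:lem:matrixtr} in place of Lemma~\ref{PBB:lem:matrix}. First I would expand $\bigl(\widehat{h}(\omega)\bigr)^j$ as a $j$-fold sum over tuples $(x_1,\dots,x_j)\in(\mathbb{F}_q^{n-k})^j$, using the multiplicativity property $\tau_\omega(x)\tau_\omega(y)=\tau_\omega(x+y)$ from \eqref{PBB:elementary_tau} to rewrite the product $\tau_\omega(x_1)\cdots\tau_\omega(x_j)$ as $\tau_\omega(x_1+\cdots+x_j)$. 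Then I would apply the Trace transform once more, swap the order of the (finite) summations, and use the symmetry $\tau_\omega(y)=\tau_y(\omega)$ together with multiplicativity to collect everything into $\sum_\omega \tau_\omega(x_1+\cdots+x_j+y)$. By Lemma~\ref{PBB:lem:matrixtr} this inner sum equals $q^{n-k}$ when $x_1+\cdots+x_j+y=\mathbf 0$ and vanishes otherwise, so $\widehat{\widehat{h}^j}(y)=l_y\,q^{n-k}$ where $l_y$ counts the $j$-tuples of columns of $\widehat H=(H\mid\alpha_2 H\mid\cdots\mid\alpha_{q-1}H)$ summing to $-y$.

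Next I would translate "$\widehat{\widehat{h}^j}(y)\neq 0$" into "$-y$ (equivalently $y$, since the column set of $\widehat H$ is closed under negation when $p$ is odd) is a sum of $j$ columns of $\widehat H$", hence a linear combination of at most $j$ columns of $H$. The monotonicity step carries over: if $y=x_1+\cdots+x_j$ with each $x_u$ a column of $\widehat H$, then writing $x_j=-\tfrac{p-1}{2}x_j-\tfrac{p-1}{2}x_j$ (here I use that $-\tfrac{p-1}{2}\alpha_i$ is again a nonzero scalar multiple of a column of $H$, so both summands are columns of $\widehat H$) shows $y$ is also a sum of $j+1$ columns; this is exactly where oddness of $p$ is needed, and it fails for $p=2$. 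Consequently $\widehat{\widehat{h}^j}(y)\neq 0$ for all $y\neq\mathbf 0$ precisely when every nonzero $y\in\mathbb{F}_q^{n-k}$ is a linear combination of at most $j$ columns of $H$, and by \cite[Theorem~1.12.5]{PBB:Huf03} the least such $j$ is $\rho(C)$.

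The only genuinely non-routine point is making sure the scalar-multiplication bookkeeping in the definition of $\widehat H$ behaves correctly over a composite field: one must check that "column of $\widehat H$" is the same thing as "nonzero scalar multiple of a column of $H$", that this set is closed under negation and under multiplication by $-\tfrac{p-1}{2}$ (an element of the prime subfield, hence nonzero since $p$ is odd), and that a sum of $j$ such vectors is a linear combination of at most $j$ columns of $H$ — all of which hold because $\mathbb{F}_q^{\times}$ acts on columns exactly as in the prime case. I expect this to be the main obstacle only in the sense of requiring care; conceptually the argument is identical to Theorem~\ref{PBB:tm:cover_Karpovsky} once Lemma~\ref{PBB:lem:matrixtr} is in hand.
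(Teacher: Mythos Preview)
Your proposal is correct and follows essentially the same route as the paper's own proof: expand $(\widehat{h}(\omega))^j$ via the multiplicativity of $\tau_\omega$, apply the Trace transform again, invoke Lemma~\ref{PBB:lem:matrixtr} to reduce to counting $j$-tuples of columns of $\widehat H$ summing to $-y$, and then use the splitting $x_j=-\tfrac{p-1}{2}x_j-\tfrac{p-1}{2}x_j$ for the monotonicity step. The additional bookkeeping you flag (closure of the column set of $\widehat H$ under nonzero scalar multiplication) is a reasonable point of care but is not singled out in the paper, which simply treats the argument as a direct adaptation of Theorem~\ref{PBB:tm:cover_Karpovsky}.
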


\begin{proof}
As in the proof of Theorem~\ref{PBB:tm:cover_Karpovsky} we obtain for $\omega,y\in\mathbb{F}_q^{n-k}$
\begin{equation}
\left(\widehat{h}(\omega)\right)^j=\sum_{x_1,\dots,x_j\in\mathbb{F}_q^{n-k}}h(x_1)\ldots h(x_j)\tau_{\omega}(x_1+ \ldots + x_j)
\end{equation}
and
\begin{equation}\label{PBB:eq:tildetilde}
\widehat{\widehat{h}^j}(y)=\sum_{x_1,\dots,x_j\in\mathbb{F}_q^{n-k}}h(x_1)\ldots h(x_j)\sum_{\omega\in\mathbb{F}_q^{n-k}}\tau_{\omega}(x_1+ \ldots+ x_j+y).
\end{equation}
According to Lemma \ref{PBB:lem:matrixtr}, $\sum_{\omega\in\mathbb{F}_q^{n-k}}\tau_{\omega}(x_1+ \ldots+ x_j+y)\neq 0$ only if $x_1+ \ldots+ x_j+y=0$.
So $\widehat{\widehat{h}^j}(y)\neq 0$ if and only if $y$ can be represented as a sum of $j$ columns (possible repeated) of $\widehat{H}=(H|\alpha_2H|\ldots|\alpha_{q-1}H)$.

It is not difficult to see that if $y$ can be represented as a sum of $j$ columns of $\widehat{H}$ (not necessarily different) then the same vector can be represented as a sum of $j+1$ columns. Indeed,
if $y=x_1+ \ldots+ x_j$ then $y=x_1+ \ldots+ x_{j-1}-\frac{p-1}{2}x_j-\frac{p-1}{2}x_j$. Note that $p>2$ is an odd prime. Therefore, if $\widehat{\widehat{h}^j}(y)\neq 0$ then $\widehat{\widehat{h}^{j+1}}(y)\neq 0$.

Hence,
 $\widehat{\widehat{h}^j}(y)\neq 0$ if and only if $y$ is a linear combination of at most $j$ columns of $H$. If $j<\rho(C)$, then there is a vector $y\in\mathbb{F}_q^{n-k}$ which is not a linear combination of $j$ columns of $H$ and then $\widehat{\widehat{h}^j}(y)= 0$. In the other hand, if $j\ge\rho(C)$ then any vector $y\in\mathbb{F}_q^{n-k}\setminus\{\mathbf{0}\}$ is a linear combination of at most $j$ columns of $H$ and therefore $\widehat{\widehat{h}^j}(y)\neq 0$ for all $y\in\mathbb{F}_q^{n-k}\setminus\{\mathbf{0}\}$.
\end{proof}

If $q$ is even and some vector $y$ is a sum of $j$ columns of $\widehat{H}$ then there is a possibility that $y$ is not a sum of $j+1$ columns of $\widehat{H}$. But the values of $\widehat{\widehat{h}^j}(y)$ are nonnegative because of \eqref{PBB:eq:tildetilde} and Lemma~\ref{PBB:lem:matrixtr}. So one can take the sum $\widehat{\widehat{h}^1}(y)+\cdots+\widehat{\widehat{h}^j}(y)$ instead of only $\widehat{\widehat{h}^j}(y)$. This idea  was used by Karpovsky for the case $q=2$ \cite[Theorem 2]{PBB:Kar81}.

\begin{theorem}\label{PBB:tm:cover_Karp_comp_even}
Let $C$ be an $[n,k]_q$-code with a parity check matrix $H$, where $q=2^l$,
and $\widehat{h}:\mathbb{F}_q^{n-k}\to \mathbb{C}$ be the Trace transform of the characteristic function $h=h_H$.
Let
\[g_j(\omega)=\sum_{i=1}^j\left(\widehat{h}(\omega)\right)^i,\;\omega\in\mathbb{F}_q^{n-k}, \ \ j=1,\ldots,n, \]
and $\widehat{g}_j:\mathbb{F}_q^{n-k}\to \mathbb{C}$ be the Trace transform of $g_j$.
Then the covering radius $\rho(C)$ is equal to the smallest natural number $j$ such that $\widehat{g}_j(y)\neq 0$ for all $y\in\mathbb{F}_q^{n-k}$, $y\neq\mathbf{0}$.
\end{theorem}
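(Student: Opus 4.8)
The plan is to adapt the argument of Theorem~\ref{PBB:tm:cover_Karp_comp_odd} and to handle the failure of the ``$j\Rightarrow j+1$'' monotonicity by working with the partial sums $g_j$. First I would repeat the computation at the start of the proof of Theorem~\ref{PBB:tm:cover_Karp_comp_odd} verbatim: for each fixed power $i$ one has $\left(\widehat h(\omega)\right)^i=\sum_{x_1,\dots,x_i}h(x_1)\cdots h(x_i)\,\tau_\omega(x_1+\cdots+x_i)$, and applying the Trace transform and interchanging sums gives, for $y\neq\mathbf 0$,
\[
\widehat{\widehat{h}^i}(y)=\sum_{x_1,\dots,x_i\in\mathbb F_q^{n-k}}h(x_1)\cdots h(x_i)\sum_{\omega\in\mathbb F_q^{n-k}}\tau_\omega(x_1+\cdots+x_i+y).
\]
By Lemma~\ref{PBB:lem:matrixtr} the inner sum is $q^{n-k}$ when $x_1+\cdots+x_i+y=\mathbf 0$ and $0$ otherwise, so $\widehat{\widehat{h}^i}(y)=q^{n-k}\,\ell_{i,y}$ where $\ell_{i,y}\ge 0$ counts the ordered $i$-tuples of columns of $\widehat H=(H\,|\,\alpha_2H\,|\,\cdots\,|\,\alpha_{q-1}H)$ summing to $-y$. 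Summing over $i=1,\dots,j$ and using linearity of the Trace transform, $\widehat g_j(y)=\sum_{i=1}^j\widehat{\widehat{h}^i}(y)=q^{n-k}\sum_{i=1}^j\ell_{i,y}$.

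The key point is then the equivalence: for $y\neq\mathbf 0$, $\widehat g_j(y)\neq 0$ if and only if $y$ is a sum of at most $j$ columns of $\widehat H$, equivalently a linear combination of at most $j$ columns of $H$. One direction is immediate from $\widehat g_j(y)=q^{n-k}\sum_{i\le j}\ell_{i,y}$ and nonnegativity of the $\ell_{i,y}$. For the converse, if $y$ is a linear combination of $t\le j$ columns of $H$ with scalar coefficients in $\mathbb F_q^\ast$, then absorbing each nonzero scalar $\alpha_u$ into the corresponding block of $\widehat H$ shows $y$ is literally a sum of $t$ columns of $\widehat H$, so $\ell_{t,y}\ge 1$ and hence $\widehat g_j(y)\neq 0$. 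Note that, in contrast to the odd-characteristic case, I do not claim $\ell_{i,y}>0\Rightarrow\ell_{i+1,y}>0$ — in characteristic $2$ the trick $y=x_1+\cdots+x_{j-1}-\tfrac{p-1}{2}x_j-\tfrac{p-1}{2}x_j$ is unavailable — and this is precisely why the cumulative sum $g_j$ is needed: it records whether $y$ is reachable with \emph{some} number of columns up to $j$, which is exactly the notion relevant to the covering radius.

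Finally I would conclude as before using the characterization of $\rho(C)$ from \cite[Theorem 1.12.5]{PBB:Huf03}: $\rho(C)$ is the smallest $s$ such that every nonzero syndrome $y\in\mathbb F_q^{n-k}$ is a linear combination of at most $s$ columns of $H$. If $j<\rho(C)$, some nonzero $y$ is not a combination of $\le j$ columns of $H$, hence $\ell_{i,y}=0$ for all $i\le j$ and $\widehat g_j(y)=0$; if $j\ge\rho(C)$, every nonzero $y$ is a combination of at most $\rho(C)\le j$ columns, hence $\widehat g_j(y)\neq 0$ for all such $y$. Therefore the smallest $j$ with $\widehat g_j(y)\neq 0$ for all $y\neq\mathbf 0$ is exactly $\rho(C)$. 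The main obstacle is the converse direction of the equivalence above — making precise that a $\mathbb F_q$-linear combination of columns of $H$ becomes an honest repetition-allowed \emph{sum} of columns of the expanded matrix $\widehat H$, and being careful that the argument uses the full range $i=1,\dots,j$ rather than a single power, since the single-power monotonicity that carried the odd case fails here.
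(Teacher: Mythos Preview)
Your proposal is correct and matches the paper's approach: the paper does not give a formal proof of this theorem but instead precedes the statement with a short paragraph explaining exactly your strategy --- that the values $\widehat{\widehat{h}^i}(y)$ are nonnegative by \eqref{PBB:eq:tildetilde} and Lemma~\ref{PBB:lem:matrixtr}, so in even characteristic (where the ``$j\Rightarrow j+1$'' step fails) one replaces $\widehat{\widehat{h}^j}(y)$ by the cumulative sum $\widehat{\widehat{h}^1}(y)+\cdots+\widehat{\widehat{h}^j}(y)=\widehat{g}_j(y)$. Your write-up simply fills in the details the paper leaves implicit.
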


\begin{lemma}
Let $q=p^l$ for a prime $p$ and $h:\mathbb{F}_q^s\to \mathbb{Z}$ be a function with the property $h(x)=h(ux)$ for all $u\in\mathbb{F}_q\backslash \{0\}$ and $x \in \mathbb{F}_q^s$.
If $\widehat{h}:\mathbb{F}_q^s\to \mathbb{C}$ is the Trace transform of $h$ then $\widehat{h}$ is actually an integer valued function and
$\widehat{h}(\omega)=\widehat{h}(u\omega)$ for all $u\in\mathbb{F}_q\backslash \{0\}$ and $\omega \in \mathbb{F}_q^s$.
\end{lemma}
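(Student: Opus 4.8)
The plan is to imitate, almost line for line, the proof of the corresponding statement for the Vilenkin--Chrestenson transform, with $\tau_\omega$ in place of $v_\omega$ and the identities \eqref{PBB:elementary_tau} in place of \eqref{PBB:elementary}. There are two assertions: the invariance $\widehat h(u\omega)=\widehat h(\omega)$ under nonzero scaling, and the fact that $\widehat h$ takes values in $\mathbb{Z}$.

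For the invariance, I would compute, for $u\in\mathbb{F}_q\setminus\{0\}$,
\[
\widehat h(u\omega)=\sum_{x\in\mathbb{F}_q^s}h(x)\tau_{u\omega}(x)=\sum_{x\in\mathbb{F}_q^s}h(x)\tau_{\omega}(ux),
\]
where the second equality uses $\tau_{u\omega}(x)=\zeta^{\mathrm{Tr}(u(\omega\cdot x))}=\zeta^{\mathrm{Tr}(\omega\cdot(ux))}=\tau_\omega(ux)$, a consequence of \eqref{PBB:eq2}, the bilinearity of the inner product over $\mathbb{F}_q$, and the $\mathbb{F}_p$-linearity of the trace. Substituting $x'=ux$, which is a bijection of $\mathbb{F}_q^s$ since $u\ne0$, and invoking the hypothesis in the form $h(x)=h(u^{-1}x')=h(x')$, the last sum becomes $\sum_{x'\in\mathbb{F}_q^s}h(x')\tau_\omega(x')=\widehat h(\omega)$.

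For integrality, I would split $\mathbb{F}_q^s$ into its $\theta+1$ proportionality classes, with $\theta=\theta(q,s)$ and $e_1,\dots,e_\theta$ the columns of the matrix $G_s$ from \eqref{PBB:Gk}, so that every nonzero vector is uniquely of the form $ue_i$ with $u\in\mathbb{F}_q\setminus\{0\}$. Using $h(ue_i)=h(e_i)$ and $\tau_\omega(\mathbf 0)=1$,
\[
\widehat h(\omega)=h(\mathbf 0)+\sum_{i=1}^{\theta}h(e_i)\sum_{u\in\mathbb{F}_q\setminus\{0\}}\tau_\omega(ue_i),
\]
so it is enough to see that each inner sum is an integer. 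Now $\sum_{u\in\mathbb{F}_q\setminus\{0\}}\tau_\omega(ue_i)=\sum_{u\in\mathbb{F}_q\setminus\{0\}}\zeta^{\mathrm{Tr}(u(\omega\cdot e_i))}$; if $\omega\cdot e_i=0$ this is $q-1$, while if $\omega\cdot e_i\ne0$ then $u(\omega\cdot e_i)$ runs through $\mathbb{F}_q\setminus\{0\}$, and the sum equals $\bigl(\sum_{a\in\mathbb{F}_q}\zeta^{\mathrm{Tr}(a)}\bigr)-1=-1$, the bracket vanishing by the base case $s=1$ of Lemma~\ref{PBB:lem:matrixtr}. This is precisely the trace analogue of \eqref{PBB:eq_sum}, and since $h$ is $\mathbb{Z}$-valued it follows that $\widehat h(\omega)\in\mathbb{Z}$ for every $\omega$.

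I do not expect a genuine obstacle here: the argument is essentially bookkeeping over proportionality classes together with the substitution that moves the invariance of $h$ inside the sum. The one place where the composite-field structure (rather than the prime-field computation of the previous section) actually enters is the evaluation of the Gauss-type sum $\sum_{u\in\mathbb{F}_q\setminus\{0\}}\zeta^{\mathrm{Tr}(ua)}$, and this is already supplied by the surjectivity of $\mathrm{Tr}\colon\mathbb{F}_q\to\mathbb{F}_p$ used in the proof of Lemma~\ref{PBB:lem:matrixtr}. So the main care is simply in keeping the two cases $\omega\cdot e_i=0$ and $\omega\cdot e_i\ne0$ straight.
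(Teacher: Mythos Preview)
Your proof is correct and follows essentially the same route as the paper: the invariance under scaling is established by the substitution $x\mapsto ux$ using $\tau_{u\omega}(x)=\tau_\omega(ux)$, and integrality is obtained by splitting into proportionality classes and showing that each inner sum $\sum_{u\in\mathbb{F}_q\setminus\{0\}}\zeta^{\mathrm{Tr}(u\,a)}$ equals $q-1$ or $-1$ according as $a=\omega\cdot e_i$ is zero or not. The only cosmetic difference is that the paper, having already proved the invariance, restricts attention to $\omega=\mathbf{0}$ and $\omega=e_i$ and writes out the $-1$ case explicitly as $-1+p^{\,l-1}\sum_{a\in\mathbb{F}_p}\zeta^a$, whereas you handle a general $\omega$ and appeal to the $s=1$ case of Lemma~\ref{PBB:lem:matrixtr}; both are the same computation.
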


\begin{proof} If $u\in\mathbb{F}_q\backslash\{0\}$ then
\[\widehat{h}(u\omega)=\sum_{x\in\mathbb{F}_q^s}h(x)\tau_{u\omega}(x)=\sum_{x\in\mathbb{F}_q^s}h(x)\tau_{\omega}(ux)=\sum_{x\in\mathbb{F}_q^s}h(ux)\tau_{\omega}(ux)=\widehat{h}(\omega)
\]
because of the properties of the inner product over $\mathbb{F}_q$.

Let $e_1,\ldots,e_{\theta}$ be a maximal set of non-zero and non-proportional vectors in $\mathbb{F}_q^s$ where $\theta=\theta(q,s)$. Then
\begin{eqnarray}
\widehat{h}(\mathbf{0})&=&\sum_{x\in\mathbb{F}_q^s} h(x) \tau_{\mathbf{0}}(x)=\sum_{x\in\mathbb{F}_q^s} h(x)= h(\mathbf{0})+(q-1)\sum_{j=1}^\theta h(e_j),\label{PBB:eq0comp}\\
\widehat{h}(e_i)&=&\sum_{x\in\mathbb{F}_q^s} h(x) \tau_{e_i}(x)=h(\mathbf{0})+\sum_{j=1}^\theta\sum_{u\in \mathbb{F}_q\backslash\{0\}}h(ue_j)\tau_{e_i}(ue_j)\nonumber\\
&=& h(\mathbf{0})+\sum_{j=1}^\theta\sum_{u\in \mathbb{F}_q\backslash\{0\}}h(e_j)\tau_{e_i}(ue_j)\nonumber\\
&=&h(\mathbf{0})+\sum_{j=1}^\theta h(e_j)\sum_{u\in \mathbb{F}_q\backslash\{0\}}\zeta^{\mbox{Tr}(ue_i\cdot e_j)}.\label{PBB:eq_neq0comp}
\end{eqnarray}
So $\widehat{h}$ will be an integer valued function if $\sum_{u\in \mathbb{F}_q\backslash\{0\}}\zeta^{\mbox{Tr}(ue_i\cdot e_j)}$ are integers for all $i,j=1,\ldots,\theta$. Really, if $e_i\cdot e_j=0$ then $ue_i\cdot e_j=0$ for all $u\in \mathbb{F}_q\backslash\{0\}$ and the sum will be $q-1$.
If $e_i\cdot e_j\neq 0$ then $\{ue_i\cdot e_j|u\in\mathbb{F}_q,u\neq 0\}=\mathbb{F}_q\backslash\{0\}$. After applying the trace map over this set we obtain
$p^{l-1}$ values $a$ for every $a\in\mathbb{F}_p\backslash\{0\}$ and $p^{l-1}-1$ values 0. So
\[\sum_{u\in \mathbb{F}_q\backslash\{0\}}\zeta^{\mbox{Tr}(ue_i\cdot e_j)}=\sum_{u\in \mathbb{F}_q\backslash\{0\}}\zeta^{\mbox{Tr}(u)}=-1+p^{l-1}\sum_{a\in\mathbb{F}_p}\zeta^a=-1.\]
Therefore
\begin{equation}\label{PBB:eq:sum_tr}
\sum_{u\in \mathbb{F}_q\backslash\{0\}}\zeta^{\mbox{Tr}(ue_i\cdot e_j)}=\left\{\begin{array}{rl}
q-1,&\mathrm{if} \ e_i\cdot e_j=0\\
-1,&\mathrm{if} \ e_i\cdot e_j\neq 0\end{array}\right.
\end{equation}
This ends the proof.
\end{proof}

The equations \eqref{PBB:eq0comp}, \eqref{PBB:eq_neq0comp} and \eqref{PBB:eq:sum_tr} allow us to use the reduced distribution for calculating the trace transform as in \eqref{PBB:eqtt}.

\section{Conclusion}

This paper discusses the problem for computing the covering radius of a linear $[n,k]_q$ code over a finite field. An algorithm based on Vilenkin-Chrestenson transform is presented. The transform is applied on the characteristic function of a parity check matrix of the code. Corollary \ref{PBB:cor1} gives a method for computing the covering radius using the reduced distribution of a vector of length $\theta(q,n-k)$. This method is different from Karpovsky's algorithm for the binary case presented in \cite{PBB:Kar81} where the used transform have to be computed $\rho(C)$ times. In our algorithm this is not necessary if there is a lower bound for the covering radius $\rho$. Such a lower bound can be obtained using a fast heuristic algorithm \cite{PBB:Baicheva_Bouyukliev}.
Another advantage of the presented method is that the transformed vector is of length $\theta(q,n-k)$ and not $q^{n-k}$, as in Section \ref{PBB:tm:cover_Karpovsky} which makes it convenient to apply for a wide range of codes.

\section*{Acknowledgements}

This research was supported by a Bulgarian NSF contract KP-06-N32/2-2019.


\end{document}